\documentclass[11pt]{article}
\usepackage{epsfig,psfrag}
\usepackage{times}
\usepackage{amsmath}
\usepackage{amssymb}
\usepackage{bbm}
\usepackage{amscd}
\usepackage{xspace,color}
\input xy
\xyoption{all}

\newcommand{\eop}{\hfill\usebox{\smallProofsym}\bigskip}  %
\newsavebox{\smallProofsym}                            
\savebox{\smallProofsym}                               %
{                                                     %
\begin{picture}(6,6)                                   %
\put(0,0){\framebox(6,6){}}                            %
\put(0,2){\framebox(4,4){}}                            %
\end{picture}                                          %
}                                                      %

\makeatletter
\long\def\@makecaption#1#2{%
  \vskip\abovecaptionskip
  \sbox\@tempboxa{\small #1: #2}%
  \ifdim \wd\@tempboxa >\hsize
    \small #1: #2\par
  \else
    \global \@minipagefalse
    \hb@xt@\hsize{\hfil\box\@tempboxa\hfil}%
  \fi
  \vskip\belowcaptionskip}
\makeatother

\newcommand {\mm}[1] {\ifmmode{#1}\else{\mbox{\(#1\)}}\fi}


\newcommand{\eps}           {\mm{\varepsilon}}


\newtheorem{theorem}{Theorem}

\newtheorem{proposition}[theorem]{Proposition}

\newtheorem{claim}[theorem]{Claim}


\title{Normal variation for adaptive feature size}
\author{Nina Amenta\thanks{Dept. of CS, U. of California, Davis, CA 95616,
amenta@cs.ucdavis.edu}~~ and~~ Tamal K. Dey\thanks{Dept. of CSE, Ohio State U.,
Columbus, OH 43210, tamaldey@cse.ohio-state.edu}}

\begin{document}
\maketitle

\section*{Background}
Let $\Sigma$ be a closed, smooth surface in $\mathbb{R}^3$.
For any two sets $X,Y\subset\mathbb{R}^3$, let $d(X,Y)$
denote the Euclidean distance between $X$ and $Y$.
The local feature size $f(x)$ at a point $x\in\Sigma$ 
is defined to be the distance $d(x,M)$ where $M$ is
the medial axis of $\Sigma$. 
Let $n_p$ denote the unit normal (inward) to $\Sigma$ at point $p$.
Amenta and Bern in their paper~\cite{AB98} claimed
the following:
\begin{claim}
Let $q$ and $q'$ be any two points in $\Sigma$
so that $d(q,q')\leq \eps \min\{f(q),f(q')\}$
for $\eps\leq \frac{1}{3}$.
Then, $\angle{n_q,n_{q'}}\leq \frac{\eps}{1-3\eps}$. 
\end{claim}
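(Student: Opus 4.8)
The plan is to pin down $n_{q'}$ entirely by metric, ball‑based constraints rather than by integrating the Gauss map along a path on $\Sigma$ (where the non‑constant feature size is awkward). I would first recall the standard \emph{medial ball} fact: writing $\Omega$ for the open region bounded by $\Sigma$, every $p\in\Sigma$ carries an open inner ball $B^{\mathrm{in}}_p\subseteq\Omega$ and an open outer ball $B^{\mathrm{out}}_p\subseteq\mathbb{R}^3\setminus\overline\Omega$, each tangent to $\Sigma$ at $p$, of radius at least $f(p)$, and disjoint from $\Sigma$. Since two balls tangent to a common plane at a common point from the same side are nested, the open radius‑$f(p)$ balls $B^{\pm}_p$ centred at $p\pm f(p)\,n_p$ lie inside $B^{\mathrm{in}}_p$ and $B^{\mathrm{out}}_p$ respectively; hence $B^{+}_p\subseteq\Omega$, $B^{-}_p\subseteq\mathbb{R}^3\setminus\overline\Omega$, both are disjoint from $\Sigma$, and---comparing each with the genuine medial ball---$d(p+f(p)n_p,\Sigma)=d(p-f(p)n_p,\Sigma)=f(p)$, with $p$ the nearest point of $\Sigma$.

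The key step, which I expect to carry the whole argument, is a \emph{cross}-ball inequality obtained from a topological separation. Put $c_1:=q+f(q)\,n_q$, which lies in $\Omega$ (it is the centre of $B^{+}_q$). The open ball $B^{-}_{q'}$, centred at $q'-f(q')n_{q'}$, lies in $\mathbb{R}^3\setminus\overline\Omega$, and every segment from a point of $\Omega$ to a point of $\mathbb{R}^3\setminus\overline\Omega$ meets $\Sigma$; hence $d(c_1,B^{-}_{q'})\ge d(c_1,\Sigma)=f(q)>0$. In particular $c_1$ lies strictly outside $\overline{B^{-}_{q'}}$, so that distance equals $\|c_1-(q'-f(q')n_{q'})\|-f(q')$, which gives
\[
 \bigl\|\,(q+f(q)\,n_q)-(q'-f(q')\,n_{q'})\,\bigr\|\ \ge\ f(q)+f(q').
\]
The mirror‑image argument---the ball $B^{+}_{q'}\subseteq\Omega$ against the point $q-f(q)n_q\in\mathbb{R}^3\setminus\overline\Omega$---yields the same estimate with $n_q$ and $n_{q'}$ both negated. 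The only delicate part here is making the separation and nearest‑point statements airtight, but these are routine once the medial‑ball picture is in place; the essential idea is to pair the \emph{inner} ball at one point with the \emph{outer} ball at the other, which is precisely what the naive estimate---using only that $q'$ lies outside the two medial balls \emph{at} $q$---fails to exploit.

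Finally I would square both inequalities and expand. Writing $r=f(q)$, $r'=f(q')$, $v=q'-q$, $d=\|v\|$, $\theta=\angle(n_q,n_{q'})$, expanding $\|r\,n_q+r'\,n_{q'}-v\|^2\ge(r+r')^2$ and cancelling $r^2+r'^2$ gives $\cos\theta\ge 1-\tfrac{d^2}{2rr'}+\tfrac{\langle n_q,v\rangle}{r'}+\tfrac{\langle n_{q'},v\rangle}{r}$, and the negated inequality gives the same bound with the two linear terms flipped in sign; one of the two necessarily has nonnegative linear part, so in either case
\[
 \cos\theta\ \ge\ 1-\frac{d^2}{2\,f(q)\,f(q')}.
\]
Since $d\le\eps\min\{f(q),f(q')\}$ forces $d^2\le\eps^2 f(q)f(q')$, this yields $\cos\theta\ge 1-\tfrac{\eps^2}{2}$, i.e.\ $\sin(\theta/2)\le\eps/2$, hence $\theta\le 2\arcsin(\eps/2)$. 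A routine comparison ($\arcsin y\le y/\sqrt{1-y^2}$ for $y\in[0,1)$, together with $\sqrt{1-\eps^2/4}\ge 1-3\eps$ when $\eps\le\tfrac13$) then closes the argument: $\angle(n_q,n_{q'})=\theta\le 2\arcsin(\eps/2)\le \eps/\sqrt{1-\eps^2/4}\le\eps/(1-3\eps)$. This route in fact proves the sharper bound $\angle(n_q,n_{q'})\le 2\arcsin(\eps/2)$ (attained when $\Sigma$ is a sphere), and the hypothesis $\eps\le\tfrac13$ enters only through the last comparison.
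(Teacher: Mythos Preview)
Your argument is correct and follows a genuinely different route from the paper's. The paper establishes the claim by proving a stronger theorem: it parameterizes the straight segment $qq'$, at each interior point $p(t)$ passes to the offset surface $\Sigma_\omega=h^{-1}(\omega)$ through $p(t)$, bounds the curvature of that offset by $1/f(m)\le 1/((1-\eps)f(q))$ using the $1$-Lipschitz property of $f$, and integrates $|\theta'(t)|$ along the segment to obtain $\angle(n_q,n_{q'})\le \eps/(1-\eps)$ (sharpened to $-\ln(1-\eps)$ in a closing remark). Your proof bypasses offset surfaces, curvature, and integration altogether: pairing the \emph{inner} feature-size ball at one point with the \emph{outer} one at the other, and using that $\Sigma$ separates them, yields directly the quadratic estimate $\cos\theta\ge 1-d(q,q')^2/(2f(q)f(q'))$. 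This is more elementary and delivers the sharp bound $\theta\le 2\arcsin(\eps/2)$, attained on a round sphere, which is strictly smaller than both of the paper's constants. On the other hand, the paper's integration argument needs only the one-sided hypothesis $d(q,q')\le\eps f(q)$; your ball argument can be adapted to that hypothesis as well, via $f(q')\ge(1-\eps)f(q)$, at the price of a slightly weaker constant.
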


Unfortunately, the proof of this claim as given in
Amenta and Bern~\cite{AB98} is wrong; it  also
appears in the book by Dey~\cite{Dey06}.  
In this short note, we provide a correct
proof with an improved bound of $\frac{\eps}{1-\eps}$.

\begin{theorem}
Let $q$ and $q'$ be two points in $\Sigma$
with $d(q,q')\leq \eps f(q)$ where $\eps \leq \frac{1}{3}$.
Then, $\angle{n_q,n_{q'}} \leq \frac{\eps}{1-\eps}$.
\label{nor-thm}
\end{theorem}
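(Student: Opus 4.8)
The plan is to join $q$ to $q'$ by a short curve on $\Sigma$ and bound the total turning of the unit normal along it. I would use the standard consequence of the medial axis: for every $p\in\Sigma$ the two open balls of radius $f(p)$ tangent to $\Sigma$ at $p$, one on each side of $T_p$, miss $\Sigma$; hence $\Sigma$ bends no more sharply at $p$ than a sphere of radius $f(p)$, so the principal curvatures satisfy $|\kappa_i(p)|\le 1/f(p)$, and (being a distance function) $f$ is $1$-Lipschitz on $\Sigma$. Two immediate corollaries: first, emptiness of the balls at $q$ gives $|\langle n_q,\,q'-q\rangle|\le d^2/(2f(q))$ with $d:=d(q,q')$, and more generally any chord of $\Sigma$ of length $\ell$ stays within $O(\ell^2/f)$ of $\Sigma$; second, along any unit-speed curve $\beta$ on $\Sigma$ the Gauss map moves with speed $\|\tfrac{d}{ds}n_{\beta(s)}\|=\|S_{\beta(s)}(\beta'(s))\|\le 1/f(\beta(s))$, so for any curve from $q$ to $q'$ on $\Sigma$, $\angle(n_q,n_{q'})\le\int \tfrac{1}{f}$.

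For the curve I would take $\gamma:[0,1]\to\Sigma$, the nearest-point projection of the segment $\sigma(t)=q+t(q'-q)$. It is well defined: each $\sigma(t)$ has $d(\sigma(t),q)\le d$, hence $d(\sigma(t),M)\ge f(q)-d\ge\tfrac23 f(q)>0$ (one place the hypothesis $\eps\le\tfrac13$ enters), so the segment lies in a tube around $\Sigma$ on which the projection is smooth. Writing $\rho(t)$ for the signed distance of $\sigma(t)$ to $\Sigma$ and differentiating $\sigma(t)=\gamma(t)-\rho(t)n_{\gamma(t)}$, the normal part gives $\rho'(t)=-\langle q'-q,n_{\gamma(t)}\rangle$ and the tangential part gives $(I+\rho(t)S_{\gamma(t)})\gamma'(t)=P_{T_{\gamma(t)}}(q'-q)$, hence $\|\gamma'(t)\|\le d/\big(1-|\rho(t)|/f(\gamma(t))\big)$. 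By the chord estimate, $|\rho(t)|=O(d^2/f(q))=O(\eps^2 f(q))$; so $\|\gamma'(t)\|\le(1+O(\eps^2))\,d$, the length of $\gamma$ exceeds $d$ only by a relative $O(\eps^2)$, and $f(\gamma(t))\ge f(q)-\|\gamma(t)-q\|\ge f(q)-(1+O(\eps))\,td$.

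Substituting into $\angle(n_q,n_{q'})\le\int_0^1\tfrac{\|\gamma'(t)\|}{f(\gamma(t))}\,dt$ and using the elementary $\int_0^1 \frac{d\,dt}{f(q)-td}=\ln\tfrac{1}{1-\eps}$, one gets $\angle(n_q,n_{q'})\le\ln\tfrac{1}{1-\eps}$ up to terms of lower order in $\eps$; since $\ln\tfrac{1}{1-\eps}=\ln\!\big(1+\tfrac{\eps}{1-\eps}\big)\le\tfrac{\eps}{1-\eps}$ with a slack of order $\eps^2$, this gives the claimed bound once the error terms are absorbed.

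The main obstacle is making that last step rigorous, and it is mildly circular: the bounds on $|\rho(t)|$, on $\|\gamma'(t)\|$, and on the decay of $f$ along $\gamma$ all refer to how far $n_{\gamma(t)}$ has already drifted from $n_q$, which is the quantity being bounded. I would close the loop by a bootstrap: start from the crude bound $\angle(\cdot,\cdot)\le\pi$ valid at every point of $\gamma$, feed it into the inequalities above to get a sharper bound, and iterate to a fixed point lying below $\tfrac{\eps}{1-\eps}$ (a Gr\"onwall-type argument with the normal drift as the unknown does the same thing in one stroke). Along the way one must check, again using $\eps\le\tfrac13$, that $\gamma$ never leaves the region where $f(q)-\|\gamma(t)-q\|$ and $1-|\rho(t)|/f(\gamma(t))$ stay positive, so that all the displayed quantities make sense.
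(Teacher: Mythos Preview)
Your overall strategy coincides with the paper's: both track the extended normal $n_{p}$ (the normal at the nearest point of $\Sigma$) along the straight segment $p(t)=\sigma(t)$ from $q$ to $q'$ and integrate its rate of change. The difference is in how that derivative is bounded, and this is precisely where your circularity enters and where the paper sidesteps it.

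You pull back to $\Sigma$ via the projection $\gamma(t)$ and bound $\|S_{\gamma}\|\le 1/f(\gamma)$ and $\|(I+\rho S_\gamma)^{-1}\|\le 1/(1-|\rho|/f(\gamma))$ separately, obtaining $|\theta'|\le d\big/\!\big(f(\gamma)-|\rho|\big)$. Because $\gamma(t)$ lies off the segment by $|\rho(t)|$, lower-bounding $f(\gamma(t))$ drags $|\rho|$ back in and forces your bootstrap. The paper instead works on the \emph{offset surface} $\Sigma_\omega$ with $\omega=d(p(t),\Sigma)$, which passes through $p(t)$ itself and shares normals with $\Sigma$. The principal curvatures of $\Sigma_\omega$ at a point $m$ are still bounded by $1/f(m)$ (the focal points of $\Sigma_\omega$ coincide with those of $\Sigma$, so $f(m)\le r_m$), and in the limit this gives $|\theta'(t)|\le 1/f(p(t))$. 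Since $p(t)$ is \emph{on} the segment, $f(p(t))\ge (1-\eps)f(q)$ by $1$-Lipschitzness alone, and integration gives $\eps/(1-\eps)$ with no iteration. In your language this amounts to bounding the composite operator $S_\gamma(I+\rho S_\gamma)^{-1}$---whose eigenvalues $\kappa_i/(1+\rho\kappa_i)$ are exactly the principal curvatures of $\Sigma_\omega$ at $p(t)$---directly by $1/f(p(t))$, rather than by the product of the two operator norms. With the finer bound $f(p(t))\ge(1-t\eps)f(q)$ the integral gives exactly your $-\ln(1-\eps)$, which the paper records as a remark.

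So your plan is on the right track, but the sub-multiplicative splitting is genuinely lossy, and whether the Gr\"onwall/bootstrap you sketch closes at the constant $\eps/(1-\eps)$ (rather than $\eps/(1-\eps)+C\eps^2$ for some $C$ you have not pinned down) is not established in what you wrote. Replacing the two norm bounds by the single offset-surface curvature bound removes the difficulty outright.
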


\section*{Definitions and Preliminaries}
For any point $p\in\mathbb{R}^3$, let $\tilde p$ denote the
closest point of $p$ in $\Sigma$. When $p$ is a point
in $\Sigma$, the normal to $\Sigma$ at $p$ is well defined.
We extend this definition to any point $p\in\mathbb{R}^3$.
Define the normal $n_p$ at $p\in\mathbb{R}^3\setminus M$ as the 
normal to $\Sigma$ at $\tilde p$. Similarly, we extend the
definition of local feature size $f$ to $\mathbb{R}^3$.
For any point $p\in\mathbb{R}^3$, let $f(p)$ be the distance
of $p$ to the medial axis of $\Sigma$. Notice that $f$ is
$1$-Lipschitz.
If two points $x$ and $y$ lie on a surface $F\subset \mathbb{R}^3$, let  
$d_{F}(x,y)$ denote the {\em geodesic distance} 
between $x$ and $y$.
The following facts are well known in differential geometry.

\begin{proposition}
\label{propnor}
Let $F$ be a smooth surface in $\mathbb{R}^3$. Let $q$ and
$q'$ be two points in $F$. Then, 
$$
\lim_{d\rightarrow 0}\frac{d_{F}(q,q')}{d(q,q')}=1.
$$
\end{proposition}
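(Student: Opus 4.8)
\noindent The plan is to sandwich the ratio between $1$ and something tending to $1$. The lower bound $d_F(q,q')\ge d(q,q')$ is free: any rectifiable arc in $F$ joining $q$ to $q'$ is also an arc in $\mathbb{R}^3$ joining them, so its length is at least $\|q-q'\|=d(q,q')$, and $d_F$ is the infimum of such lengths. Hence the ratio is always $\ge 1$, and the whole content is the upper estimate $\limsup_{q'\to q} d_F(q,q')/d(q,q')\le 1$, for which I would exhibit an explicit short arc on $F$ between nearby points.

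To build that arc, fix $q$ and let $P\colon\mathbb{R}^3\to T_qF$ be the orthogonal projection onto the tangent plane of $F$ at $q$. Since $F$ is a $C^1$ surface and the differential of $P|_F$ at $q$ is the identity of $T_qF$, the inverse function theorem gives a neighborhood $U$ of $q$ in $F$ on which $\phi:=P|_U$ is a diffeomorphism onto a neighborhood of $q=P(q)$ in $T_qF$, with $(d\phi^{-1})_q=\mathrm{id}$. For $q'\in U$ I take $\psi\colon[0,1]\to F$ to be the $\phi$-preimage of the straight segment from $\phi(q)$ to $\phi(q')$; this is an arc on $F$ from $q$ to $q'$, and
$$
\mathrm{length}(\psi)=\int_0^1\bigl\|(d\phi^{-1})_{\phi(q)+s(\phi(q')-\phi(q))}\bigl[\phi(q')-\phi(q)\bigr]\bigr\|\,ds\ \le\ \Bigl(\sup_{U'}\|d\phi^{-1}\|\Bigr)\,\|\phi(q')-\phi(q)\|,
$$
where $U'$ is a neighborhood of $q$ that shrinks to $\{q\}$ as $q'\to q$. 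Because $\phi^{-1}$ is $C^1$ with $(d\phi^{-1})_q=\mathrm{id}$, the supremum tends to $1$; and since $P$ is $1$-Lipschitz, $\|\phi(q')-\phi(q)\|=\|P(q')-P(q)\|\le\|q'-q\|=d(q,q')$. Therefore $d_F(q,q')\le\mathrm{length}(\psi)\le(1+o(1))\,d(q,q')$ as $q'\to q$, which together with the lower bound gives $\lim_{q'\to q}d_F(q,q')/d(q,q')=1$.

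I expect no genuine obstacle: the estimate above is elementary, and the only mildly delicate point is uniformity. If one reads the statement as $d_F(q,q')/d(q,q')\to 1$ uniformly as $d(q,q')\to 0$ (the form actually used later, where $F=\Sigma$ is closed, hence compact), one covers $F$ by finitely many neighborhoods as above and takes the worst modulus, using that $d\phi^{-1}$ is uniformly continuous on compacta; equivalently, once $d(q,q')$ is smaller than the width of a tubular neighborhood of $F$ one may instead take $\psi$ to be the nearest-point projection of the segment $[q,q']$ onto $F$, whose length is then at most $d(q,q')/(1-d(q,q')/(2r))$ for the corresponding $r$. Either way the limit is $1$.
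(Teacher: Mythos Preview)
Your argument is correct. The lower bound $d_F\ge d$ is immediate, and your upper bound via the tangent-plane chart is a clean realization of the standard proof: the projection $P|_F$ has differential the identity at $q$, so $\phi^{-1}$ is $C^1$ with $\|d\phi^{-1}\|\to 1$ near $q$, and pulling back the straight segment in $T_qF$ gives an arc on $F$ of length $(1+o(1))\,\|\phi(q')-\phi(q)\|\le(1+o(1))\,d(q,q')$. Your remark on uniformity is apt and both routes you sketch (finite cover by charts, or nearest-point projection of the chord with the Lipschitz bound $1/(1-\rho/r)$ in a tube of reach $r$) are valid.

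As for comparison: the paper does not actually supply a proof of this proposition. It is listed, together with Proposition~\ref{propkappa}, under the heading ``The following facts are well known in differential geometry'' and is simply invoked later when passing from $d_{\sigma_p}(p,r)$ to $d(p,r)$ in the limit. So there is no paper argument to compare against; you have filled in a detail the authors took for granted.
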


\begin{proposition}
\label{propkappa}
Consider the geodesic path between $q,q'$ on a smooth surface
$F$ in $\mathbb{R}^3$. 
Let $\kappa_{m}$
be the maximum curvature on this geodesic path.
Then 
$\angle{n_q,n_{q'}} \leq \kappa_{m} d_{F}(q,q')$. 
\end{proposition}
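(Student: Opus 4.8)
The plan is to realize the variation of the normal as a curve on the unit sphere and to bound the length of that curve by integrating the rate of change of the Gauss map, which is controlled by the shape operator of $F$. Throughout I read $\kappa_m$ as the largest principal curvature of $F$ at points of the geodesic, equivalently the maximum over all tangent directions of the absolute normal curvature; this is the quantity that will later be bounded by $1/f$ when the proposition is applied.

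First I would parametrize the geodesic from $q$ to $q'$ by arc length as $\gamma\colon[0,L]\to F$ with $\gamma(0)=q$, $\gamma(L)=q'$, and $L=d_F(q,q')$. Writing $n(s)$ for the unit normal to $F$ at $\gamma(s)$, the map $s\mapsto n(s)$ is the Gauss image of the geodesic and traces a curve on the unit sphere $S^2$. Since the angle between two unit vectors is exactly their great-circle distance on $S^2$, and a great-circle distance never exceeds the length of any arc joining the two endpoints, we obtain
$$
\angle{n_q,n_{q'}} \;=\; d_{S^2}\big(n(0),n(L)\big)\;\le\; \int_0^L \Big\|\frac{dn}{ds}\Big\|\,ds .
$$

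Next I would bound the integrand pointwise. Differentiating the Gauss map along $\gamma$ gives $dn/ds = dN_{\gamma(s)}(\gamma'(s))$, i.e.\ (up to sign) the shape operator $S$ of $F$ applied to the unit tangent $\gamma'(s)$. The shape operator is self-adjoint, so it is diagonalized by an orthonormal frame of principal directions with eigenvalues the principal curvatures $\kappa_1(s),\kappa_2(s)$; hence its operator norm equals $\max(|\kappa_1(s)|,|\kappa_2(s)|)$, and applied to the unit vector $\gamma'(s)$ this yields $\|dn/ds\|=\|S(\gamma'(s))\|\le\max(|\kappa_1(s)|,|\kappa_2(s)|)\le\kappa_m$. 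Substituting into the integral gives $\angle{n_q,n_{q'}}\le\int_0^L\kappa_m\,ds=\kappa_m L=\kappa_m\,d_F(q,q')$, as claimed.

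The main obstacle is this pointwise step: one must notice that the rate at which the normal turns is $\|S(\gamma'(s))\|$, the full Weingarten image of the tangent, and \emph{not} merely the normal curvature $\langle S(\gamma'),\gamma'\rangle$ in the direction of travel. For a geodesic the latter equals the curvature of $\gamma$ as a space curve, but it can be strictly smaller than $\|S(\gamma')\|$ (for instance along a slanted helix on a cylinder), so bounding by the space-curve curvature alone would be insufficient. Self-adjointness of the shape operator is precisely what lets me dominate $\|S(\gamma')\|$ by the largest principal curvature $\kappa_m$, after which the spherical-length inequality of the first step closes the argument. Note that the geodesic hypothesis enters only through $L=d_F(q,q')$: the integrand depends on the tangent direction alone, so the same estimate applies to any curve, but the geodesic is the one whose length realizes the geodesic distance appearing in the bound.
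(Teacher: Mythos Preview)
The paper does not actually prove this proposition; it is listed, together with Proposition~\ref{propnor}, under ``The following facts are well known in differential geometry'' and is simply invoked later in the proof of Theorem~\ref{nor-thm}. So there is no paper proof to compare against.

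Your argument is the standard one and is correct. Bounding $\angle{n_q,n_{q'}}$ by the spherical length of the Gauss image of the geodesic, and then bounding the speed $\|dn/ds\|=\|S(\gamma')\|$ by the operator norm of the shape operator (the largest absolute principal curvature), is exactly how this inequality is obtained. Your reading of $\kappa_m$ as the maximum principal curvature of $F$ along the geodesic, rather than the curvature of the geodesic as a space curve, is also the reading consistent with the paper's later use of the proposition: there one needs $1/\kappa_m \ge f(m)$, which holds for the largest principal curvature via the medial-ball argument but would not be what is relevant under the space-curve interpretation. Your cautionary remark that the normal-curvature (space-curve) interpretation would be too weak---with the helix on a cylinder as witness---is well taken and worth keeping; it is precisely the subtlety that makes the self-adjointness of the shape operator the right tool here. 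Finally, your closing observation that the geodesic hypothesis is used only to identify the path length with $d_F(q,q')$ is accurate.
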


\section*{The Proof}
We are to measure
$\angle{n_q,n_{q'}}$ for two points $q$ and $q'$
in $\Sigma$. 
One approach would be to use the propositions above to bound the length 
of a path from $p$ to $q$ on $\Sigma$ and then use that length to bound the
change in normal direction, 
but we can get a better bound by 
considering the direct path from $p$ to $q$. 

Let $\Sigma_{\omega}$ denote an 
offset of $\Sigma$, that is, each point
in $\Sigma_{\omega}$ has distance $\omega$ from $\Sigma$. 
Formally, consider the distance function
$$
h \colon \mathbb{R}^3\rightarrow \mathbb{R},\,\, 
h(x)\mapsto d(x,\Sigma).
$$
Then, $\Sigma_{\omega}=h^{-1}({\omega})$.

\begin{claim}
For $\omega\geq 0$
let $p$ be a point in $\Sigma_{\omega}$ where $\omega<f(\tilde p)$.
There is an open set $U\subset \mathbb{R}^3$ so that
$\sigma_p=\Sigma_{\omega}\cap U$ is a smooth $2$-manifold
which can be oriented so that
$n_x$ is the normal to $\sigma_p$ at any $x\in\sigma_p$.
\label{nor-claim}
\end{claim}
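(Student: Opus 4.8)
The plan is to exhibit $\sigma_p$ as the image of a small piece of $\Sigma$ around $\tilde p$ under the normal push-off map $\Phi(x,t)=x+t\,n_x$, and then to read off both the manifold structure of $\sigma_p$ and its normals directly from $\Phi$.

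\textbf{Setting up.} The case $\omega=0$ is immediate, since then $\Sigma_0=\Sigma$ and $p=\tilde p$; so assume $\omega>0$ and write $p=\tilde p+\omega' n_{\tilde p}$ with $\omega'=\pm\omega$, the sign recording which side of $\Sigma$ the point $p$ lies on (that $p-\tilde p$ is parallel to $n_{\tilde p}$ is forced by $\tilde p$ being the nearest point). Since $f$ is $1$-Lipschitz and $f(\tilde p)>\omega$, there are a connected open neighbourhood $V$ of $\tilde p$ in $\Sigma$ and a number $\delta\in(0,\omega)$ with $f>\omega+\delta$ on $V$. As $\Sigma$ and its Gauss map $x\mapsto n_x$ are smooth, the map
$$
\Phi\colon V\times\mathbb{R}\to\mathbb{R}^3,\qquad \Phi(x,t)=x+t\,n_x
$$
is smooth, and I would compute its derivative at $(\tilde p,\omega')$ in an orthonormal basis $e_1,e_2$ of $T_{\tilde p}\Sigma$ of principal directions: $\partial_t\Phi=n_{\tilde p}$ and $\partial_{e_i}\Phi=(1-\omega'\kappa_i)e_i$, so the Jacobian determinant is $(1-\omega'\kappa_1)(1-\omega'\kappa_2)$.

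\textbf{The push-off is a local diffeomorphism.} This is the step I expect to be the main obstacle: one must show $1-\omega'\kappa_i\neq 0$, and this is exactly where $\omega<f(\tilde p)$ is used. By the Lipschitz property the segment from $\tilde p$ to $p$ stays at distance $\geq f(\tilde p)-\omega>0$ from the medial axis, so it misses $M$; pushing along $n_{\tilde p}$ past the nearer centre of principal curvature in the $p$-direction would force $\Sigma$ to enter the corresponding osculating ball and thereby cost $\tilde p$ its status as nearest point, which cannot happen before $M$ is reached. Hence $\omega'$ lies strictly between $0$ and that centre of curvature, and a centre of curvature on the opposite side gives $1-\omega'\kappa_i>1$; either way $1-\omega'\kappa_i>0$ and $D\Phi_{(\tilde p,\omega')}$ is invertible. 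By the inverse function theorem there are a connected open $V'\subseteq V$ containing $\tilde p$, a $\delta'\in(0,\delta]$, and an open $U\subseteq\mathbb{R}^3$ containing $p$ such that $\Phi$ restricts to a diffeomorphism $V'\times(\omega'-\delta',\omega'+\delta')\to U$. In particular $g:=\Phi(\cdot,\omega')|_{V'}$ is a smooth embedding, so $\sigma_p:=g(V')$ is an embedded smooth $2$-manifold.

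\textbf{Identifying $\sigma_p$ with $\Sigma_\omega\cap U$ and computing normals.} Here I would invoke the standard fact that for $x\in\Sigma$ and $|t|<f(x)$ the point $x+t\,n_x$ has $x$ as its \emph{unique} nearest point on $\Sigma$, so that $d(x+t\,n_x,\Sigma)=|t|$ and, by the extension of the definitions, $x+t\,n_x$ has normal $n_x$ (this follows from a maximal empty ball tangent to $\Sigma$ at $x$, whose radius is at least $f(x)$ because its centre lies on $M$). Since $\delta'<\omega$ and $\omega+\delta'<f$ on $V'$, this applies to every $\Phi(x,t)$ with $(x,t)\in V'\times(\omega'-\delta',\omega'+\delta')$: for $y=\Phi(x,t)$ we get $d(y,\Sigma)=|t|$, and $|t|=\omega$ forces $t=\omega'$ because $\delta'<\omega$ pins down the sign of $t$. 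Thus $y\in\Sigma_\omega$ iff $t=\omega'$, giving $\Sigma_\omega\cap U=g(V')=\sigma_p$. Finally, for $y=g(x)\in\sigma_p$ the nearest point of $y$ on $\Sigma$ is $x$, so $n_y=n_x$; and $T_y\sigma_p=dg_x(T_x\Sigma)=\operatorname{span}\{(1-\omega'\kappa_1)e_1,(1-\omega'\kappa_2)e_2\}=T_x\Sigma$, which is orthogonal to $n_x=n_y$. Hence $y\mapsto n_y$ is a smooth unit field everywhere normal to $\sigma_p$, and taking it as the orientation of $\sigma_p$ is precisely the assertion of the claim.
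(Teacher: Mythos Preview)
Your argument is correct, but it takes a different route from the paper. The paper proceeds much more briefly via the \emph{implicit} function theorem applied to the distance function $h(x)=d(x,\Sigma)$: since $\omega<f(\tilde p)$ the point $p$ is not on the medial axis, so $h$ is smooth near $p$ with nonzero gradient $\nabla h(x)=(x-\tilde x)/\|x-\tilde x\|=n_x$, and $\sigma_p=h^{-1}(\omega)\cap U$ is then automatically a smooth $2$-manifold with normal $n_x$. You instead run the \emph{inverse} function theorem on the normal push-off $\Phi(x,t)=x+tn_x$, computing the Jacobian in principal directions to see that $\Phi$ is a local diffeomorphism, and then read off $\sigma_p=\Phi(V'\times\{\omega'\})$ and its tangent space explicitly. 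Your approach is more self-contained (you do not invoke the smoothness of $h$ off $M$ as a black box; you essentially reprove it), and it yields an explicit parametrisation of $\sigma_p$ by a patch of $\Sigma$ together with the curvature relation $dg_x=\mathrm{Id}-\omega' S_x$, which is extra information. The cost is length: the paper's three-line argument becomes a page, and your justification that $1-\omega'\kappa_i\neq 0$ (via osculating balls and loss of nearest-point status) is exactly what underlies the cited fact that $h$ is smooth off the medial axis.
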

\begin{proof}
Since $\omega<f(\tilde p)$, $p$ is not a point on the medial
axis. Therefore, the distance function $h$
is smooth at $p$. One can apply the implicit function theorem
to claim that there exists an open set $U\subset\mathbb{R}^3$
where
$$
\sigma_p=h^{-1}(\omega)\cap U
$$
is a smooth $2$-manifold. 
The unit gradient 
$(\frac{\nabla h}{\|\nabla h\|})_x=\frac{x-\tilde x}{\|x-\tilde x\|}$
which is precisely $n_x$ up to orientation is
normal to $\sigma_p$ at $x\in\sigma_p$. 
\end{proof}$\eop$

\begin{proof}[{\sf Proof of Theorem~\ref{nor-thm}}]
Consider parameterizing the segment $qq'$ by the length of $qq'$. 
Take two arbitrarily close points $p=p(t)$ and $p'=p(t+\Delta t)$
in $qq'$ for arbitrarily small $\Delta t>0$. 
Let $\theta(t)=\angle{n_q,n_{p(t)}}$ 
and $\Delta \alpha=\angle{n_p,n_{p'}}$. Then,
$|\theta(t+\Delta t)-\theta(t)|\leq \Delta \alpha$
giving 
$$|\theta'(t)|\leq \lim_{\Delta t\rightarrow 0} \frac{\Delta\alpha}{\Delta t}.
$$ 
If we show that $\lim_{\Delta t\rightarrow 0} \frac{\Delta\alpha}{\Delta t}$
is no more than $\frac{1}{(1-\eps)f(q)}$
we are done since then
\begin{eqnarray*}
\angle{n_q,n_{q'}}&\leq& \int_{qq'}  |\theta'(t)|dt\\
                  &\leq& \int_{qq'}\frac{1}{(1-\eps)f(q)} dt\\
                  &=& \frac{d(q,q')}{(1-\eps)f(q)}\\
		  &\leq& \frac{\eps}{(1-\eps)}.
\end{eqnarray*} 

We have $d(q,\tilde{p}) \leq d(q,p) + d(p,\tilde{p})$ and $d(q,p) \leq \eps f(q)$.
Since also $\omega = d(p,\tilde{p}) \leq d(p,q) \leq \eps f(q)$,
we have $\omega \leq \frac{2\eps}{1-2 \eps}f(\tilde p)$
(by a standard argument using the fact that the function $f$ is $1$-Lipshitz).
Therefore, $\omega < f(\tilde p)$ for $\eps < 1/3$, and 
there is a smooth neighborhood $\sigma_p\subset \Sigma_{\omega}$ of $p$
satisfying Claim~\ref{nor-claim}. 

Let $r$ be the closest
point to $p'$ in $\Sigma_{\omega}$, and let 
$\Delta t$ be small enough so that  $r$ and
the geodesic between $p$ and $r$ in $\sigma_p$ lies in $\sigma_p$. 
Notice that , by Claim~\ref{nor-claim}, 
$\Delta \alpha = \angle{n_p,n_{p'}} = \angle{n_p,n_{r}}$. 
\begin{claim}
$\lim_{\Delta t\rightarrow 0} \frac{d(p,r)}{\Delta t}\leq 1.$
\label{close}
\end{claim}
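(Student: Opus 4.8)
The plan is to prove the sharper fact that the limit equals $\|Pu\|$, where $u$ is the unit direction of the segment $qq'$ (the segment is parameterized by arc length, so $u$ is a unit vector and $d(p,p')=\Delta t$) and $P$ denotes orthogonal projection onto the tangent plane $T_p\sigma_p$ of the offset surface at $p$; the claim follows at once since $\|Pu\|\le\|u\|=1$. Conceptually this is nothing more than the statement that the nearest-point projection $\pi$ onto $\sigma_p$ --- well defined and $C^1$ on a neighborhood of $p$ because $\sigma_p$ is a smooth $2$-manifold containing $p$ (Claim~\ref{nor-claim}) --- has differential at $p$ equal to $P$, so that $d(p,r)/\Delta t=\|\pi(p')-\pi(p)\|/\|p'-p\|\to\|Pu\|$. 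I would spell this out with an explicit local computation so the note stays self-contained.

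Concretely, since $T_p\sigma_p$ is orthogonal to $n_p$, near $p$ the surface $\sigma_p$ is the graph of a $C^2$ function over its tangent plane: it is the image of $v\mapsto p+v+g(v)\,n_p$ for $v$ in a small disk of $T_p\sigma_p$, with $g(0)=0$ and $\nabla g(0)=0$, hence $g(v)=O(\|v\|^2)$ and $\nabla g(v)=O(\|v\|)$. Writing $u=Pu+(u\cdot n_p)n_p$ and $p'=p+\Delta t\,u$, the point $r$ is $p+v^*+g(v^*)n_p$ where $v^*$ minimizes
$$
\phi(v)=\|\Delta t\,Pu-v\|^2+\bigl(\Delta t\,(u\cdot n_p)-g(v)\bigr)^2 .
$$
Evaluating $\phi$ at $v=\Delta t\,Pu$ gives $\phi(v^*)=O(\Delta t^2)$, so $\|v^*-\Delta t\,Pu\|=O(\Delta t)$ and in particular $\|v^*\|=O(\Delta t)$; substituting this into the optimality condition $\Delta t\,Pu-v^*=-(\Delta t(u\cdot n_p)-g(v^*))\,\nabla g(v^*)$ and using $\nabla g(v^*)=O(\|v^*\|)=O(\Delta t)$ improves it to $v^*=\Delta t\,Pu+O(\Delta t^2)$. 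Hence $r-p=\Delta t\,Pu+O(\Delta t^2)$, so $d(p,r)=\Delta t\,\|Pu\|+O(\Delta t^2)$, and dividing by $\Delta t$ and letting $\Delta t\to 0$ finishes the proof.

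Two small points require attention. One must know that $r$ actually lies in this graph neighborhood, i.e.\ that the nearest point to $p'$ on all of $\Sigma_\omega$ coincides with the nearest point on $\sigma_p=\Sigma_\omega\cap U$; this is guaranteed by taking $\Delta t$ small, exactly the reduction already made in the paragraph before the claim. The only genuine content is the quadratic estimate $g(v)=O(\|v\|^2)$, $\nabla g(v)=O(\|v\|)$, which is where the smoothness of the offset surface (obtained from the implicit function theorem in Claim~\ref{nor-claim}) enters; this is precisely what lets us beat the trivial triangle-inequality bound $d(p,r)\le d(p,p')+d(p',r)\le 2\Delta t$, which by itself would only give a limit of $2$. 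Everything else is routine Taylor expansion.
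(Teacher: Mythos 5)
Your proof is correct, but it takes a genuinely different route from the note. The note's argument is synthetic: it considers the triangle $prp'$, uses that the segment $p'r$ is orthogonal to $\sigma_p$ at $r$ (because $r$ is the nearest point of $\Sigma_\omega$ to $p'$), and splits into two cases according to whether the tangent plane at $r$ separates $p$ from $p'$; in the separating case $\angle{prp'}$ is obtuse and $d(p,r)\le d(p,p')=\Delta t$ outright, and otherwise $d(p,r)\cos\alpha\le \Delta t$, where $\alpha$ is essentially the angle between the chord $pr$ and the tangent plane at $r$, which tends to $0$ as $\Delta t\to 0$. You instead write $\sigma_p$ locally as a $C^2$ graph over $T_p\sigma_p$ and extract the nearest point from the first-order optimality condition; the logic is sound: the a priori bound $\phi(v^*)\le\phi(\Delta t\,Pu)=O(\Delta t^2)$ gives $\|v^*\|=O(\Delta t)$ (so $v^*$ is interior and stationarity applies for small $\Delta t$), and bootstrapping through $\nabla g(v^*)=O(\Delta t)$ yields $r-p=\Delta t\,Pu+O(\Delta t^2)$, hence the limit equals $\|Pu\|$. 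What your version buys is a sharper conclusion --- the limit exists and is exactly the length of the tangential component of the direction of $qq'$, identified with the differential of the nearest-point projection --- at the cost of second-order Taylor data for the graph function; the paper's version buys brevity and needs only first-order information (tangent planes and the vanishing chord--tangent angle). Both arguments rely on the same reduction, made just before the claim, that for $\Delta t$ small the nearest point $r$ lies in the patch $\sigma_p$, which you correctly flag rather than reprove.
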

\begin{proof}
Consider the triangle $prp'$. If the tangent plane to
$\sigma_p$ at $r$ separates $p$ and $p'$, the
angle $\angle{prp'}$ is obtuse. It follows that
$d(p,r)\leq d(p,p')=\Delta t$. In the other case when
the tangent plane to $\sigma_p$ at $r$ does not
separate $p$ and $p'$, the angle $\angle{prp'}$ is
non-obtuse. Let $x$ be the foot of the perpendicular
dropped from $p$ on the line of $p'r$. We have
$d(p,r)\cos\alpha \leq d(p,p')$ where $\alpha$ is
the acute angle $\angle{rpx}$. 
Combining the two cases we have
$d(p,r)/\Delta t \leq \frac{1}{\cos\alpha}$.
Since $\alpha$ goes to $0$ as $\Delta t$ goes to $0$, we have
$\lim_{\Delta t\rightarrow 0} \frac{d(p,r)}{\Delta t} \leq 1$.
\end{proof}$\eop$ 

Now consider the geodesic between $p$ and $r$ in $\sigma_p$,  
and let $m$ be the point on the geodesic at which the maximum curvature 
$\kappa_{m}$ is realized. 
Recall that $d_{\sigma_p}(p,r)$ denotes the geodesic distance
between $p$ and $r$ on $\sigma_p$.
Let $r_m$ be the radius
of curvature corresponding to $\kappa_{m}$, i.e., $\kappa_{m}=1/r_m$.
Clearly, $f(m) \leq r_m$. So, Proposition~\ref{propkappa} tells us that 
$$
\Delta\alpha\leq \frac{d_{\sigma_p}(p,r)}{f(m)}.
$$

Therefore,
\begin{eqnarray*}
\lim_{\Delta t\rightarrow 0} \frac{\Delta\alpha}{\Delta t}
&\leq& \lim_{\Delta t\rightarrow 0} \frac{d_{\sigma_p}(p,r)}{\Delta t \: f(m)} \\
\end{eqnarray*}

In the limit when $\Delta t$ goes to zero, $d_{\sigma_p}(p,r)$
approaches $d(p,r)$ which in turn approaches $\Delta t$ (Proposition~\ref{propnor}
and Claim~\ref{close}). 
Meanwhile, $d(q,m) \leq d(q,p) + d(p,r)$ approaches $d(q,p)\leq \eps f(q)$ as
$\Delta t$ goes to zero (again by Claim~\ref{close}).
So, in the limit, $f(m) > (1-\eps) f(q)$ (again using the fact that $f$ is $1$-Lipshitz). 
Therefore,
$$
\lim_{\Delta t\rightarrow 0}\frac{\Delta\alpha}{\Delta t}\leq \frac{1}{(1-\eps)f(q)}
$$
which is what we need to prove.
\end{proof}$\eop$

\vspace{0.2in}
\noindent
{\sf Remark}: The bound on normal variation can be slightly improved
to $-\ln(1-\eps)$ by observing the following. We used that
$d(q,p)\leq \eps f(q)$ to arrive at the bound $f(m)>(1-\eps)f(q)$.
In fact, one can observe that $d(q,p)\leq \eps t f(q)$ giving
$f(m)>(1-\eps t)f(q)$. This gives $|\theta'(t)|\leq \frac{1}{(1-\eps t)f(q)}$.
We have
$$
\angle{n_q,n_{q'}}\leq \int_{qq'}\frac{1}{(1-\eps t)f(q)}dt
=- \frac{d(q,q')\ln(1-\eps)}{\eps f(q)}= -\ln(1-\eps).
$$

\vspace{0.2in}
\noindent
{\bf Acknowledgement} We thank Siu-Wing Cheng, Jian Sun, and Edgar
Ramos for discussions on the normal variation problem.

\end{document}